\documentclass[conference,10pt]{IEEEtran}
\usepackage{cite}
\usepackage{times}
\usepackage{amsmath}
\usepackage{bbm}
\usepackage{mathrsfs}
\usepackage{amsbsy}
\usepackage{amsxtra}
\usepackage{amsopn}
\usepackage{latexsym} 
\usepackage{amsfonts}
\usepackage{epsfig}
\usepackage{graphicx}
\usepackage{amssymb}
\usepackage{stmaryrd}
\usepackage{algorithm}
\usepackage{algorithmic}
\usepackage{color,soul}
\usepackage{epstopdf}
\usepackage{amsthm}
\usepackage{capt-of}
\newtheorem{theorem}{Theorem}

\usepackage{gensymb}
\pagenumbering{gobble}
\setlength{\abovedisplayskip}{3.5pt}
\setlength{\belowdisplayskip}{3.5pt}
\usepackage{lipsum}

\usepackage{url}
\renewcommand{\hl}[1]{#1}
\usepackage{booktabs} 
\DeclareMathOperator*{\argmax}{arg\,max}

\usepackage{array}
\newcolumntype{C}[1]{>{\centering\arraybackslash}m{#1}}
\usepackage{enumitem}
\usepackage[font=footnotesize]{caption}

\title{Crowd Counting Through Walls Using WiFi \vspace{-0.12in}}
\author{Saandeep Depatla and Yasamin Mostofi}
\begin{document}
\maketitle
\begin{abstract}
Counting the number of people inside a building, from outside and without entering the building, is crucial for many applications. In this paper, we are interested in counting the total number of people walking inside a building (or in general behind walls), using readily-deployable WiFi transceivers that are installed outside the building, and only based on WiFi RSSI measurements. The key observation of the paper is that the inter-event times, corresponding to the dip events of the received signal, are \hl{fairly} robust to the attenuation through walls (for instance as compared to the exact dip values). We then propose a methodology that can extract the total number of people from the inter-event times. More specifically, we first show how to characterize the wireless received power measurements as a superposition of renewal-type processes. By borrowing theories from the renewal-process literature, we then show how the probability mass function of the inter-event times carries vital information on the number of people. We validate our framework with $44$ experiments in five different areas on our campus ($3$ classrooms, a conference room, and a hallway), using only one WiFi transmitter and receiver installed outside of the building, and for up to and including $20$ people. Our experiments further include areas with different wall materials, such as concrete, plaster, and wood, to validate the robustness of the proposed approach. Overall, our results show that our approach can estimate the total number of people behind the walls with a high accuracy while minimizing the need for prior calibrations.
\end{abstract}
\vspace{-0.05in}
\begin{section}{Introduction}{\label{sec_intro}}
The ability to estimate the total number of people in an area can be useful for several applications. For instance, smart buildings can optimize the energy consumption based on the number of people in the building \cite{agarwal2010occupancy,ardakanian2016non}. Retails can better plan their business by assessing which parts of the store get more visitors \cite{perdikaki2012effect}. Smart cities can better plan the resources by estimating which areas of the city are more crowded \cite{BlueScan}.

A survey of the literature indicates that the problem of crowd counting has been investigated by researchers from computer-vision, wireless networking, and environmental science communities. In computer-vision, for instance, photographic images of an area are used to identify the number of people present in the area \cite{nichols2008multi,li2008estimating,lin2001estimation}. However, these methods 1) require a network of cameras to be installed in the area of interest and as such have a high deployment cost, 2) cannot work in the dark, 3) cannot work behind walls, and 4) pose privacy issues. Researchers in the environmental science community utilize the characteristics of the area of interest such as temperature, concentration of carbon dioxide, and dew point to identify the number of people in the area \cite{lam2009occupancy,jiang2016indoor,wang2000experimental,chenpredictive}. However, sensing the environment in this manner requires a direct access to the area of interest and cannot be used in areas occluded by walls or in areas where access is restricted. Furthermore, they require installing specialized sensors.

The ability of radio frequency (RF) signals to penetrate through objects, such as walls, combined with the ubiquity of wireless devices, such as WiFi routers, provide a great potential for imaging \cite{depatla2015x,gonzalez2014integrated,karanam20173d}, tracking \cite{wilson2009through}, and occupancy estimation using RF signals. Crowd counting based on wireless devices can be mainly classified into (i) device-based active and (ii) device-free passive methods. The device-based active methods rely on people to carry a communication device \cite{weppner2013bluetooth,wirz2013probing}, which can limit their applicability. For this reason, there has recently been a considerable interest in device-free methods, which do not require people to carry any device. Instead, device-free methods rely on the interaction of the wireless signals with the people in the area of interest.

In this context of device-free counting,  \cite{yuan2011crowd} classifies the crowd density in an area into low, medium, and high using a network of wireless nodes. \cite{yoshidaestimating} uses the variance of the WiFi received signal strength indicator (RSSI) to estimate up to $7$ people. In this approach, an extensive prior learning phase with different number of people is used. Furthermore, the approach requires a large number of wireless nodes ($10$ Rx and $1$ Tx). \cite{xu2013scpl} simultaneously estimates the number and the location of \hl{up to $4$ people with $22$} wireless nodes. \cite{di2016trained} uses  differential channel state information (CSI) to classify the number of people. The method has an extensive calibration phase and is only tested with up to $7$ people. \cite{xi2014electronic} counts up to $30$ people, using CSI measurements at $30$ subcarriers \hl{and with $4$ WiFi links} located in the area. The method requires an extensive training phase with $7$ experiments and up to $7$ people walking in the \hl{same area a priori.} \cite{depatla2015occupancy} counts up to $10$ people in an area using only the RSSI measurements of a single WiFi link by deriving a probability density function (PDF) of the received signal strength. 
While the approach in \cite{depatla2015occupancy} does not require extensive prior calibrations, such as having different number of people walk in the area, there is still a need to make measurements when a small number of people stand on the Line of Sight (LOS) link a priori. Furthermore, \cite{depatla2015occupancy} relies on labeling the dips of the received signal, which can be prone to errors in behind-wall scenarios due to the high attenuation by the walls.

In summary, great progress has been made towards crowd counting with WiFi signals. 
However, all the aforementioned work are on counting in the same room where the transceivers are located and do not count through walls. In other words, to the best of our knowledge, \textit{\hl{there is no work in the literature that has demonstrated through-wall counting}}. Furthermore, utilizing the existing work for the through-wall scenarios does not work. For instance, we tested \cite{xi2014electronic} \hl{in our through-wall settings and observed errors of up to $7$ people when $10$ people were present. Our previous approach} \cite{depatla2015occupancy}, \hl{on the other hand, relies on the dip values which can be highly attenuated and thus prone to measurement errors in through-wall settings. In summary, through-wall counting is a considerably challenging problem as the walls can heavily attenuate the signal, making the corresponding estimation problem more challenging,} which is the main motivation for this paper.
Moreover, most existing work on non-through wall counting have a demanding calibration phase that can be as involved as the main experiments.  For instance, \cite{xi2014electronic}, \cite{yoshidaestimating}, and \cite{di2016trained} require running multiple experiments where up to $7$ people walk in the area of interest. However, extensive calibration may not be feasible in through wall scenarios. In this paper, our proposed approach has a small calibration phase that does not have to be in the same environment. Finally, even when counting in non-behind wall settings, several existing work require a number of links for counting. For instance, \cite{xi2014electronic} and \cite{yoshidaestimating} have utilized $4$ and $11$ links respectively in areas with a comparable size to ours. In this paper, we show how to count up to $20$ people from behind walls with only one link.  Furthermore, we only utilize RSSI measurements for counting, which can be easily measured in any WiFi card, or can be implemented on any general wireless transceiver such as a Bluetooth device.

We next summarize our key contributions:
\begin{itemize}[leftmargin=*]
\vspace{-0.025in}
\item We show that the effect of a single person on the WiFi link can be modeled using a process that we refer to as a ``Renewal-type" random process.
\item We then show that the inter-event times carry vital information on the total number of people, and are more robust to the attenuation caused by the walls (as compared to the dip values), enabling a high-accuracy estimation through walls. More specifically, we use theories from Renewal process literature to model the effect of $N$ people as a superposition of ``Renewal-type" processes. We then derive the Probability Mass Function (PMF) of the inter-event times based on this model, and use it to estimate the number of people using a maximum likelihood (ML) estimator. \hl{It is noteworthy that no existing work has shown the relationship between inter-event times and the total number of people.}

\item We extensively validate our framework using $44$ real experiments in five different areas on our campus, three classrooms, a conference room, and a hallway (see Fig. \ref{fig_classroom_scenario}, \ref{fig_hfh_wall}, \ref{fig_arts}, \ref{fig_wooden_trailer}, and \ref{fig_classroom_phelps_20ppl}).  More specifically, we show that we can estimate up to and including $20$ people with an error of $2$ people or less $100$\% of the time and with an error of $1$ person or less $75$\% of the time. Our experiments further include areas with different wall materials, such as concrete, plaster, and wood, to validate the robustness of our approach.
\end{itemize}

The rest of the paper is organized as following. In Section \ref{sec_prob_statement}, we summarize our motion model and discuss the impact of the movement of the people on wireless channel measurements. In Section \ref{sec_estimation_method}, we then propose our framework to estimate the total number of people by using properties of the inter-event times. 
In Section \ref{sec_exp_results}, we thoroughly validate our framework using several experiments in five different areas on our campus. 
We conclude in Section \ref{sec_conclusions}.
\end{section}
\begin{section}{Problem Setup }{\label{sec_prob_statement}}
\begin{figure}[t]
\centering
\includegraphics[width=0.53\linewidth]{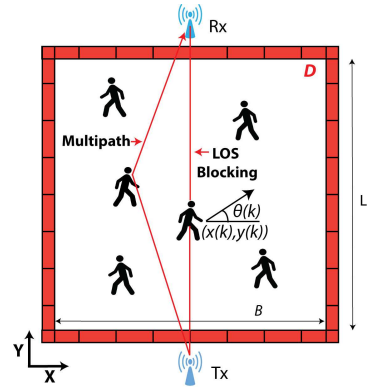}
\vspace{-0.05in}
\caption{An  illustration of the workspace with people walking inside. The red outer boundary denotes the walls. The WiFi Tx and Rx are located behind the walls and collect wireless measurements as people walk in the region. The goal of this paper is then to estimate the number of people in the workspace using only the wireless measurements. People affect the link in two major ways: LOS blockage and multipath, as shown.}\label{fig_workspace}
\vspace{-0.25in}
\end{figure}
Consider an area that is enclosed by walls, such as a room, where $N$ people are walking. Fig. \ref{fig_workspace} shows an example of this. Our goal is then to estimate the number of people walking in this area,  using only RSSI measurements of WiFi nodes that are located outside of the area. In this section, we first summarize the motion model of people and then briefly discuss the impact of movement of people on wireless measurements.
\vspace{-0.21in}
\begin{subsection}{Motion Model}\label{sec_motion_model}
In this paper, we assume that people are walking casually in the area of interest. In \cite{depatla2015occupancy}, we proposed a simple motion model to model the casual motion of people. In this paper, we adopt this model in our theoretical derivations. We next briefly summarize this motion model.

Consider the motion of a single person in the workspace $D$ of Fig. \ref{fig_workspace}. Let $x(k)$ and $y(k)$ denote the position of the person along x and y-axis respectively at time instant $k$. Furthermore, let $\theta(k)$ represent the heading of the motion w.r.t the x-axis and at time instant $k$, as indicated in Fig.\  \ref{fig_workspace}.\footnote{Throughout this paper, we use time instant $k$ to refer to $k\delta t$ for notational convenience, where $\delta t$ is the time step.} For the sake of mathematical simplicity, $x(k),\ y(k)$, and $\theta(k)$ are assumed to only take discrete values. The following model then captures a casual walk \cite{depatla2015occupancy}:
\begin{equation} \label{eq_head_model}
\theta(k) =
\begin{cases}
\theta(k-1) &\textnormal{with probability\ } p \\
\textnormal{Uniformly from \ } \mu  &\textnormal{with probability\ } 1-p,
\end{cases}	
\end{equation}
where $\mu =[0, \Delta\theta, 2\Delta\theta,\dots ,2\pi-\Delta\theta]$, and $\Delta\theta$ is the discretization step size for the heading angle.  Given the heading direction, the position dynamics is then given by the following:
\begin{equation}\label{eq_position_model}
\begin{split}
x(k+1) &= x(k) + v \delta t\  \textnormal{cos}(\theta(k)) \\
y(k+1) &= y(k) + v \delta t\  \textnormal{sin}(\theta(k)),
\end{split}
\end{equation}
where $v$ is the speed of the person, and $\delta t $ is the time step. Equation (\ref{eq_head_model}) basically describes a casual walk in which a person maintains her/his direction for a while, after which she/he may switch to a new direction. When a person reaches the boundary of $D$, she/he is furthermore assumed to reflect off of the boundary, similar to a ray of light. Equation (\ref{eq_position_model}) can then be extended to account for the boundary behavior. We refer the readers to \cite{depatla2015occupancy} for the corresponding expression for the sake of brevity.  Based on this simple dynamics, the motion can be modeled as a discrete-time Markov chain, as shown in \cite{depatla2015occupancy}.  We adopt this motion model in this paper and use the properties of the underlying Markov chain in the subsequent sections.  We note that during our experiments, we simply ask people to walk casually, without any specific instruction, and the aforementioned model is only used for the purpose of mathematical derivations.
\end{subsection}
\vspace{-0.055in}
\subsection{Impact of People on Wireless Measurements}\label{subsec_effect_ppl_RSS}
\vspace{-0.01in}
Consider the scenario shown in Fig. \ref{fig_workspace}, where multiple people are walking inside a building. A pair of WiFi nodes are located outside of the building. The WiFi transmitter (Tx) transmits wireless signals that interact with the walking people and the static objects in the area of interest, and are then received by the receiver (Rx). In general, properly capturing the interaction of the people with the transmitted signal requires detailed wave modeling to capture several propagation phenomena. In \cite{depatla2015occupancy}, it was shown that the two main phenomena of LOS blockage and multipath suffice to capture the impact of walking people on a wireless transmission. We next briefly summarize these two impacts: \\
(i)\textit{LOS Blocking:} When $l>0$ people are along the line joining the Tx and Rx (i.e., the LOS link), the received power measurements are significantly attenuated. \\
(ii)\textit{Multipath Effect:} The wireless signals from the Tx get reflected off of the people and interfere constructively/destructively at the Rx, depending on the position of the people. This causes the wireless measurements to fluctuate as people are walking. 

Fig. \ref{fig_workspace} illustrates the LOS blocking and multipath effects. The fluctuations of the received signal power, due to both LOS blocking and multipath effects, implicitly contain information about the total number of people walking in $D$ \cite{depatla2015occupancy}. In the next section, we propose a method for crowd counting behind the walls, based on LOS blockage and Renewal theory.
\end{section}
\vspace{-0.05in}
\begin{section}{Crowd Counting Behind the Walls}{\label{sec_estimation_method}}
In this section, we propose a new framework to estimate the number of people walking inside an occluded area using only the RSSI of WiFi nodes located outside of the area. Specifically, we first model the motion of a single person as a discrete-time random process. We then utilize theories from the Renewal process literature to characterize the impact of multiple people and identify the statistics that can be used to estimate the number of people. As we shall see, the inter-event times of the resulting process carry vital information on the number of people, as we shall mathematically characterize.	
\vspace{-0.075in}
\begin{subsection}{Motion of a single person as a Renewal-type process}\label{sec_motion_single_person}
Consider a scenario where $N$ people are walking in the workspace $D$, as shown in Fig. \ref{fig_workspace}. Without loss of generality, assume that the transmitter and the receiver are at the midpoint of the corresponding walls. We then say a person crosses the LOS link at time $k+1$, if $x(k+1)  \geq \frac{B}{2} \textnormal{\ and\ } x(k) \leq \frac{B}{2} $ or $ x(k+1)  \leq \frac{B}{2} \textnormal{\ and\ } x(k) \geq \frac{B}{2} $.
The time intervals between successive crosses (i.e., LOS crossings) implicitly carry vital information about the total number of people walking in the area, as we show in the next section. In this section, we first mathematically characterize the statistics of the time intervals between successive crosses, when a single person is walking in $D$. We then utilize the results derived here to model the impact of $N$ on the statistics of the cross times in the next section, when $N$ people are walking in $D$.
		
Consider a single person walking in the workspace $D$. Let $E$ denote an event of the person crossing the LOS link. Due to the non-deterministic nature of the walk, the times at which event $E$ happens are random in nature. Let $X_1, X_2, \dots ,X_T$ denote a sequence of random variables such that, 
\begin{equation} \label{eq_single_event_seq}
X_i= \begin{cases}
1 & \textnormal{if\ } E \textnormal{\ happens at time instant\ } i \\
0 & \textnormal{otherwise}.
\end{cases}
\end{equation}
Let $S_1, S_2, \dots , S_{n+1}$ denote the times at which event $E$ occurs and let $T_1, T_2, \dots ,T_{n}$ denote the inter-event times.  As mentioned in Section \ref{sec_motion_model}, we have discretized the time to a step size of $\delta t$. Thus, $S_i$, for $1 \leq i\leq n+1$, and $T_i$, for $1 \leq i \leq n$, are non-negative integers.

Fig. \ref{fig_single_renewal_process} shows a sample realization of the process along with the sample occurrence times and the inter-event times. Under the casual motion model of (\ref{eq_head_model}), the positions/headings will have a uniform distribution asymptotically in theory and after a sufficient time in practice \cite{depatla2015occupancy}. We thus assume that the positions/headings have no spatial bias in our derivations.
\begin{figure}[t]
\centering
\includegraphics[width=0.95\linewidth]{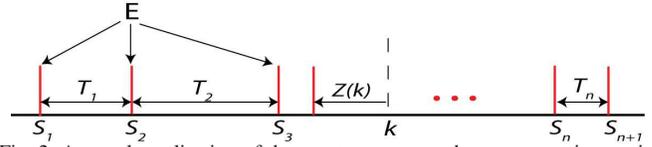}
\vspace{-0.1in}
\caption{A sample realization of the event sequence, where an event is crossing the LOS link. The events occur at $S_1, S_2, \dots , S_{n+1}$. The inter-event times are denoted by $T_1, T_2, \dots ,T_{n}$.  }\label{fig_single_renewal_process}
\vspace{-0.3in}
\end{figure}
Then, we have,
\begin{equation}
\begin{split}
P(T_i=k) & = P(X_{i+k}=1, \dots X_{i+1}=0\  |\  X_i=1) \\
& =  P(X_{j+k}=1,\dots X_{j+1}=0\ |\  X_j=1) \\
& = P(T_j=k)\  \forall\  i,j, \textnormal{and\ } k,
\end{split}
\end{equation}
where $P(.)$ denotes the probability of the argument. This implies that $\left \{T_i \right \}, \textnormal{for} \  i \in \left \{1,2, \dots , n\right \}$, are identically distributed. If the inter-event times are also independent, then the process is called a Renewal process \cite{barbu2009semi}. However, the inter-event times of our case are not necessarily independent. We thus use the term ``Renewal-type process" in this paper, to refer to this type of process where the inter-event times are identically distributed but not independent. We next characterize the PMF of the inter-event times.

Let $f$ denote the PMF of the inter-event times $T_i$. Let $Z(k)$ denote the backward recurrence time at $k$, i.e., the time from time instant $k$ that we need to travel back before encountering an event, as shown in Fig. \ref{fig_single_renewal_process}. Let $g(z;k)$ denote the PMF of $Z(k)$. We next characterize the relationship between $f$ and $g(z;k)$, which we shall utilize in Section \ref{sec_multiple_people}. 

Let $h(k)$ denote the probability that $E$ occurs at time $k$, i.e.,
$h(k)= P(k=S_j) \textnormal{\ for some\ } j$, 
where $P(.)$ denotes the probability of the argument. Then, $g(z;k)$, i.e., the probability that we need to travel backward $z$ time steps from time $k$ to encounter an event, is the product of the probability of an event occurring at time $k-z$ and the probability that there is no event at times $\{k-z+1, k-z+2, \dots k-1 \}$, given that an event occurs at $k-z$. Formally, $g(z;k)$ can be written as
\begin{equation}\label{eq_relation_delay_pmf}
g(z;k)= h(k-z)F_c(z),
\end{equation}
where $F_c(z)$ is the complimentary cumulative distribution function (CCDF) of the inter-event times.  As shown in \cite{depatla2015occupancy}, $h$ is given by the following:
\begin{equation}\label{eq_aysmpt_pc}
p_c\triangleq  h(k-z) = \frac{2v\delta t}{B\pi},\  \forall\  k \geq z.
\end{equation}
Therefore,
\begin{equation}\label{eq_delay_pmf_asympt}
g(z;k) = p_c F_c(z), \  \forall \ k \geq z.
\end{equation}
\vspace{-0.03in}		
\end{subsection}	
\vspace{-0.25in}
\begin{subsection}{Motion of Multiple People as a Superposition of Renewal-type Processes} \label{sec_multiple_people}
\vspace{-0.03in}
In this section, we characterize the PMF of the inter-event times when $N$ people are walking in $D$ and show that it contains useful information about the total number of people $N$. We then propose a ML estimator to estimate $N$, based on our characterization of the inter-event times.

Consider $N$ people walking in the workspace $D$. Let $\{X_i^j\}$, for $\ 1 \leq i \leq T$, denote the sequence of events as defined in (\ref{eq_single_event_seq}), but for the $j^\textnormal{th}$ person. Let $\{Y_i\}$, for $\  1 \leq i \leq T$, denote the corresponding superposed sequence. We define $\{Y_i\}$ as $Y_i = {\displaystyle \sum_{j=1}^{N} X_i^j}$.
Fig. \ref{fig_superposed_renewal_process} shows sample individual and superposed event sequences, for the case of $N$ people, along with their sample occurrence and inter-event times. For the superposed sequence of events, we say that an event occurred at time $i$ if $Y_i \neq 0$. In other words, an event occurs at time $i$ if at least one person crosses the LOS link at time $i$. Since multiple events can occur at the same time, we have $Y_i \in \{0,1,...,N\}$. However, we do not distinguish the events based on the value of $Y_i$, as our proposed method does not rely on the exact values of $Y_i$ and only depends on if it is zero or non-zero, which will result in a more robust estimator to measurement errors.

Let $f_p(z_p;N)$ denote the PMF of the inter-event times of the superposed process due to $N$ people. Let $Z_p(k)$ and $g_p(z_p;k)$ denote the backward recurrence time at $k$ and its corresponding PMF respectively.   
\begin{figure}[t]
\centering
\includegraphics[width=0.85\linewidth]{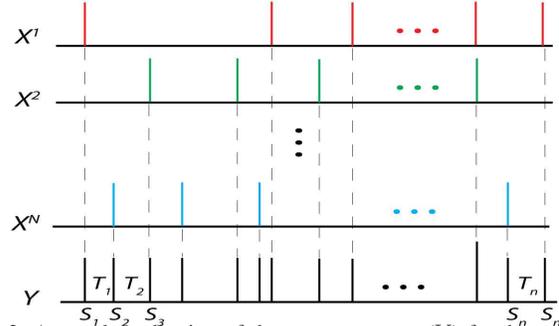}
\vspace{-0.12in}
\caption{A sample realization of the event sequence ($Y$) for the superposed process, which corresponds to $N$ people walking in the area of interest. An event $E$ here corresponds to any  crossing of the LOS link. The events occur at $S_1, S_2, \dots ,S_{n+1}$. The inter-event times are denoted by $T_1, T_2, \dots ,T_{n}$. The processes corresponding to individual people are also shown ($X_i$ s).}\label{fig_superposed_renewal_process}
\vspace{-0.15in}
\end{figure}
\vspace{-0.07in}   	
\begin{theorem}
We have the following expression for the PMF of the inter-event time: $f_p(z_p;N)= c \Delta g_p(z_p;k), \forall\  k \geq z_p $, where $c$ is a normalizing constant that is not a function of $N$, and $\Delta$ is the forward difference operator.
\end{theorem}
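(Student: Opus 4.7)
The plan is to adapt the single-person identity of equation (\ref{eq_delay_pmf_asympt}) to the superposed process and then recover $f_p$ from $g_p$ by a discrete differencing of the associated CCDF.

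First, I would derive the superposed analog of (\ref{eq_relation_delay_pmf}). For the event $\{Z_p(k) = z_p\}$ to hold, two disjoint conditions must be satisfied: (i) at least one of the $N$ walkers registers an individual-process event at time $k-z_p$, i.e., $Y_{k-z_p}\neq 0$; and (ii) no walker has an event at any of the times $k-z_p+1,\dots,k$. By the mutual independence of the $N$ walkers and the asymptotic stationarity of each individual process---the same stationarity invoked when deriving (\ref{eq_aysmpt_pc})---the probability of (i) equals $h_p \triangleq 1-(1-p_c)^N$ and is independent of $k$ whenever $k \geq z_p$. Conditional on a superposed event at $k-z_p$, the probability of (ii) is precisely the CCDF of the superposed inter-event time evaluated at $z_p$, namely $F_{p,c}(z_p)$. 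Combining these yields $g_p(z_p;k)=h_p\,F_{p,c}(z_p)$ for all $k\geq z_p$.

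Second, since $T_p$ takes nonnegative integer values, its PMF is a first difference of its CCDF: $f_p(z_p;N)=F_{p,c}(z_p-1)-F_{p,c}(z_p)$. Substituting $F_{p,c}=g_p/h_p$ from the previous step gives $f_p(z_p;N)=\tfrac{1}{h_p}\big[g_p(z_p-1;k)-g_p(z_p;k)\big]$, which, up to the sign convention used for $\Delta$, is exactly the claimed $c\,\Delta g_p(z_p;k)$. Moreover, telescoping the first differences of $g_p$ gives $g_p(0;k)=h_p$, so $1/h_p$ is nothing but the normalization that renders $f_p(\cdot;N)$ a PMF. From the viewpoint of the ML estimator in the next section, $c$ can therefore be recovered from $\Delta g_p$ alone, without having to plug in a specific $N$; this is the sense in which $c$ is ``not a function of $N$,'' and it is precisely what makes the subsequent estimation procedure feasible.

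The main obstacle, I expect, is a careful justification of steps (i)--(ii), since the superposition of renewal-type processes is in general not itself renewal-type: its inter-event times need not be independent, so the identification of the conditional probability in (ii) with $F_{p,c}(z_p)$ is not automatic. I would handle this by appealing to the stationarity of the underlying Markov chain from Section \ref{sec_motion_model}: once $k \geq z_p$ and the chain is in steady state, the conditional law of the superposed process after a marking event depends only on elapsed time, so the conditional probability of no subsequent event in a window of length $z_p$ coincides with $P(T_p>z_p)=F_{p,c}(z_p)$. As a consistency check, I would also compute $g_p$ via the representation $Z_p(k)=\min_{1\leq j\leq N}Z_j(k)$, which by independence of the walkers gives $P(Z_p(k)>z_p)=\prod_{j=1}^{N} P(Z_j(k)>z_p)$ and can be reconciled with the $h_p\,F_{p,c}$ expression above.
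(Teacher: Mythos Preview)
Your proposal is essentially the same argument as the paper's, just reordered. The paper first computes $P(Z_p(k)\geq z_p)$ explicitly via the $\min$ representation and independence of the walkers (your ``consistency check''), differences once to obtain $g_p$, and then simply declares that ``by following steps similar to (\ref{eq_relation_delay_pmf})--(\ref{eq_delay_pmf_asympt})'' one gets $f_p=c\,\Delta g_p$. You instead establish the superposed analog $g_p=h_p\,F_{p,c}$ first and difference it, relegating the $\min$ computation to a check. Your version is more explicit about what those ``similar steps'' actually are and about the need for stationarity to identify the conditional no-event probability with $F_{p,c}$; the paper does not spell this out.

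One point worth flagging: you correctly compute $c=1/h_p=1/(1-(1-p_c)^N)$, which \emph{does} depend on $N$, and then try to reconcile this with the theorem's claim that $c$ is ``not a function of $N$'' by arguing that $c$ is recoverable from normalization of $\Delta g_p$ alone. That reading is charitable, and it is indeed all that the downstream ML estimator needs, but strictly speaking the theorem statement and your derivation are in tension; the paper's own proof does not resolve this either, since it obtains $c$ by the identical route. So your argument matches the paper's in both content and in this residual looseness.
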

\begin{proof}
The backward recurrence time, $Z_p(k)$, for the superposed process can be written as
\begin{equation}
Z_p(k) = \textnormal{min} \left \{Z^1(k), Z^2(k), \dots , Z^N(k)\right \},
\end{equation}
where $Z^j(k), \text{for} \ 1 \leq j \leq N$, is the backward recurrence time for the $j^\textnormal{th}$ event sequence, and $\textnormal{min} \left \{ . \right \}$ denotes the minimum of the arguments. Then, since people are walking independently, we have,
\begin{equation}\label{eq_pooled_delay}
P(Z_p(k) \geq z_p) =  {\displaystyle \prod_{j=1}^{N} P(Z^j(k) \geq z_p)}.
\end{equation}
By substituting (\ref{eq_delay_pmf_asympt}) in (\ref{eq_pooled_delay}), we get,
\begin{equation}\label{eq_ccdf_pooled}
P(Z_p(k) \geq z_p)= {\displaystyle \Bigg[\sum_{m=z_p}^{\infty} p_c F_c(m)\Bigg]^N },\  \forall\  k \geq z_p,
\end{equation}
where $F_c(.)$ is the CCDF of the inter-event times for the case of $N=1$, and $p_c$ is the probability of crossing for the case of $N=1$, as defined in Section \ref{sec_motion_single_person}. From  (\ref{eq_ccdf_pooled}), we get the corresponding PMF as follows: 
\begin{equation}
g_p(z_p;k) = -\Delta P(Z_p(k) \geq z_p),\  \forall\  k \geq z_p.
\end{equation}
By following steps similar to (\ref{eq_relation_delay_pmf}), (\ref{eq_aysmpt_pc}), and (\ref{eq_delay_pmf_asympt}), we get the PMF of the inter-event times for the superposed process as follows, 
\begin{equation}\label{eq_PMF_N_people}
\begin{split}
&f_p(z_p;N) = c \Delta g_p(z_p;k) \textnormal{\ \ \ \ for\ } k \geq z_p\\
= c \Delta&\Bigg[{\displaystyle \Bigg(\sum_{m=z_p}^{\infty} p_c F_c(m)\Bigg)^N } - {\displaystyle \Bigg(\sum_{m=z_p+1}^{\infty} p_c F_c(m)\Bigg)^N }\Bigg].
\end{split}
\end{equation}
This proves the theorem. \hspace{2cm}
\end{proof}
It can be seen from (\ref{eq_PMF_N_people}) that the PMF of the inter-event times is an implicit function of the number of people $N$. We next use this PMF to derive an ML-based estimator for the number of people $N$.
Given the inter-event times, we can estimate the number of people by maximizing the log-likelihood of the inter-event times. Specifically, assuming the inter-event times are independent, the log-likelihood of the observed inter-event times, $T_1, T_2, \dots ,T_n$, can be characterized as a function of the number of people $M$ as follows:
\begin{equation}\label{eq_loglikelihood}
LL(M) =  \sum_{i=1}^{n} \textnormal{log}(f_p(T_i;M)).
\end{equation}
We can then estimate the number of people by maximizing the log-likelihood function,
\begin{equation}\label{eq_ml_est}
\widehat{N}_\textnormal{renew} =  \argmax_M LL(M),
\end{equation}
where $\widehat{N}_\textnormal{renew}$ is the estimate of the number of people based on the underlying renewal-type process and the inter-event times. We note that we derived (\ref{eq_loglikelihood}) under the assumption that $T_i$'s are independent.  As we mentioned earlier, this is not necessarily the case for our process.  Thus, the ML estimator of (\ref{eq_ml_est}) is not the optimum, but can provide a good estimate of the number of people, as we shall see in the next section, while maintaining a low computational complexity.

In order to implement our derived estimator, one needs to identify the inter-event times due to the LOS blocking.  Furthermore, an estimate of $F_c(z)$, the CCDF of the inter-event times when single person is walking, is needed. In the next section, we show how the inter-event times and $F_c(z)$ can be estimated in practice.
\end{subsection}
\end{section}
\begin{section}{Experimental Results}{\label{sec_exp_results}}
In this section, we validate our proposed framework 
through extensive experiments. We start by explaining our experimental setup and then present the experimental results for five different areas with up to and including $20$ people.
\begin{figure}[!t]
\centering
\includegraphics[width=0.8\linewidth]{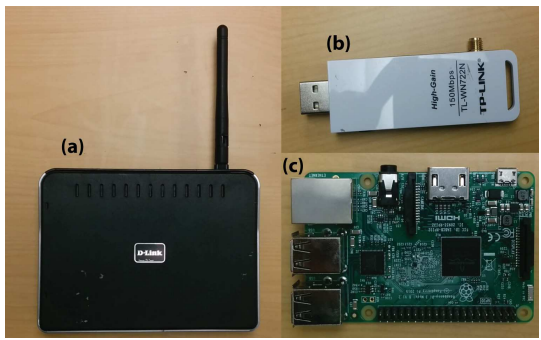}
\caption{(a) D-link WBR-1310 Router used as a WiFi Tx, (b) the WLAN card used as a WiFi Rx, and (c) Raspberry Pi board that controls the measurement operation and stores the WiFi RSSI measurements.}
\vspace{-0.28in}
\label{fig_hardware}
\end{figure}
\vspace{-0.05in}
\subsection{Experiment Setup}
As shown in Fig. \ref{fig_workspace}, our experimental setup consists of a pair of WiFi nodes for transmission and reception of wireless signals. One of the WiFi nodes is configured as a Tx, which constantly transmits wireless signals. The other WiFi node, which acts as a Rx, measures the signals that are emitted from the Tx node and records the corresponding signal strength. We use a D-Link WBR-1310 WiFi router \cite{wbr_router} as a Tx node, which operates using 802.11g wireless standard. For the Rx WiFi node, we use a TP-Link Wireless N150 WLAN card \cite{tplink_wlan_card} in 802.11g mode. This wireless card needs to be interfaced with a computer in order to make WiFi measurements. In our setup, we then use a Raspberry Pi (RPI) board \cite{rpi} for this purpose, i.e., to collect and store WiFi RSSI measurements. Fig. \ref{fig_hardware} shows the WiFi router, the WLAN card, and the RPI board used in our experiments. Omnidirectional antennas that come along with the WiFi router/card are used for transmitting and receiving the wireless signals. We use standard $2.4$ GHz frequency of WiFi in all our experiments.

Using the aforementioned experimental setup, we then run several experiments when up to $20$ people walk in the area of interest. We next first discuss the processing of the experimental data, which is followed by our experimental results.
\vspace{-0.15in}
\subsection{Initial Data Processing}\label{sec_data_processing}
In Section \ref{sec_estimation_method}, we developed a framework to estimate the number of people based on the PMF of the inter-event times, where an event refers to an instant of time where $l >0 $ people are crossing the LOS link. As discussed in Section \ref{sec_prob_statement}, the RSSI measurements are significantly attenuated when people cross the LOS link. Therefore, the RSSI measurements contain information about the times at which a cross has occurred and hence about the inter-event times. However, the received measurements are not only affected by the LOS blockage but also by the multipath fading that is caused by scattering off of the people that are not necessarily on the direct LOS. Therefore, we need to identify the times at which a LOS cross has occurred in the presence of multipath. 

 Our analysis of several measurements has shown that the fluctuations and dips caused by multipath are typically much smaller than those caused by any LOS blockage. Fig. \ref{fig_9_ppl_LOS_MP_effects} (left), for instance, shows the RSSI measurements of an experiment with $9$ people walking in an area, while Fig. \ref{fig_9_ppl_LOS_MP_effects} (right) shows the corresponding RSSI measurements in the same area but when the same number of people were instructed not to cross the LOS link. More specifically, $4$ people were instructed to walk on one side of the LOS link, with the other $5$ walking on the other side, without any person crossing the LOS link. Since there is no LOS blocking in this second case, the fluctuations in the RSSI measurements are solely due to the multipath effect. As can be seen, the measurements in Fig. \ref{fig_9_ppl_LOS_MP_effects} indicate that the effect of LOS blocking is more significant compared to the multipath effect. Specifically, the fluctuations in the RSSI measurements due to multipath are concentrated around the mean level of the RSSI signal, while blocking the LOS causes a pronounced dip in the signal level.
 \begin{figure}[t]
 \centering
 \includegraphics[width=1\linewidth]{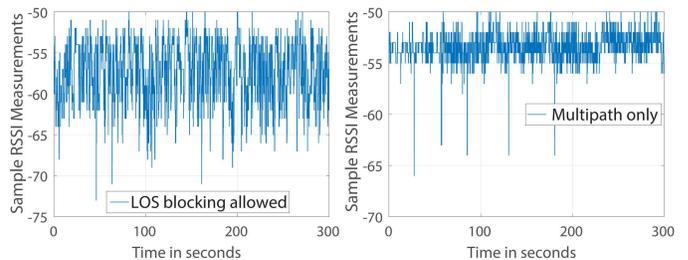}
 \vspace{-0.2in}
 \caption{(left) shows a sample RSSI power measurement when $9$ people are walking inside a building while (right) shows the RSSI power measurements in the same environment and for the same number of people when people are instructed not to cross the LOS link. The right figure thus mainly captures the fluctuations due to multipath fading. By comparing the two figures, it can be seen that the effect of LOS blocking is considerably more significant as compared to the fluctuations due to multipath.}\label{fig_9_ppl_LOS_MP_effects}
 \vspace{-0.24in}
 \end{figure}
\begin{figure*}[ht]
	\begin{minipage}{0.75\textwidth}
		\includegraphics[width=1\linewidth]{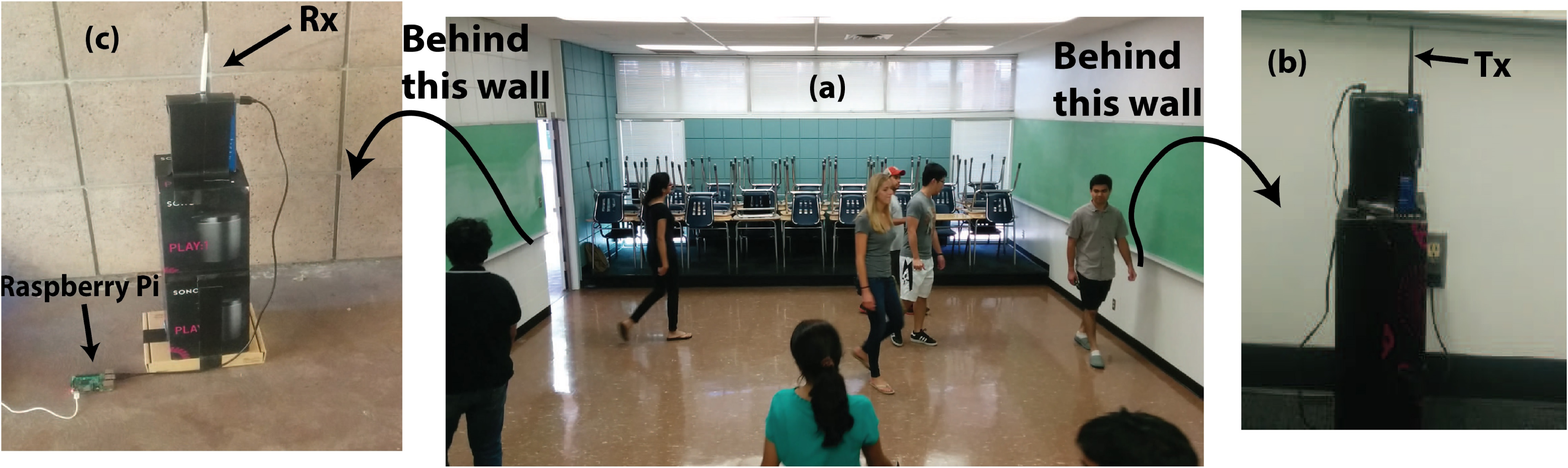}
		\captionof{figure}{(a) The first area of interest (Area 1), a closed classroom on our campus with wall made of concrete, where people are walking, (b) the Tx WiFi node located outside the classroom, behind one of the walls of the classroom as marked, and (c) the Rx WiFi node, along with the Raspberry Pi board that is used to control the data collection, which is located outside of the classroom behind the wall that is indicated.}
		\label{fig_classroom_scenario}
		\vspace{-0.25in}
	\end{minipage}
	\hspace{0.03in}
	\begin{minipage}{0.2\textwidth}
		\centering
		\resizebox{0.8\textwidth}{!}{%
			\begin{tabular}{|C{1cm}|C{1.5cm}|}
				\hline
				Number of People   & Estimated Number of People \\ \hline
				1 & 1  \\ \hline
				3 & 3  \\ \hline
				5 & 4  \\ \hline
				7 & 7  \\ \hline
				9 & 9  \\ \hline
		\end{tabular}}
		\captionof{table}{A sample result for counting through walls based on our proposed approach, for the classroom scenario of Fig. \ref{fig_classroom_scenario} on our campus (Area 1).}	\label{table_classroom}	
		\vspace{-0.1in}
	\end{minipage}
\end{figure*}
\begin{figure}[t]
	\centering
	\includegraphics[width=0.7\linewidth]{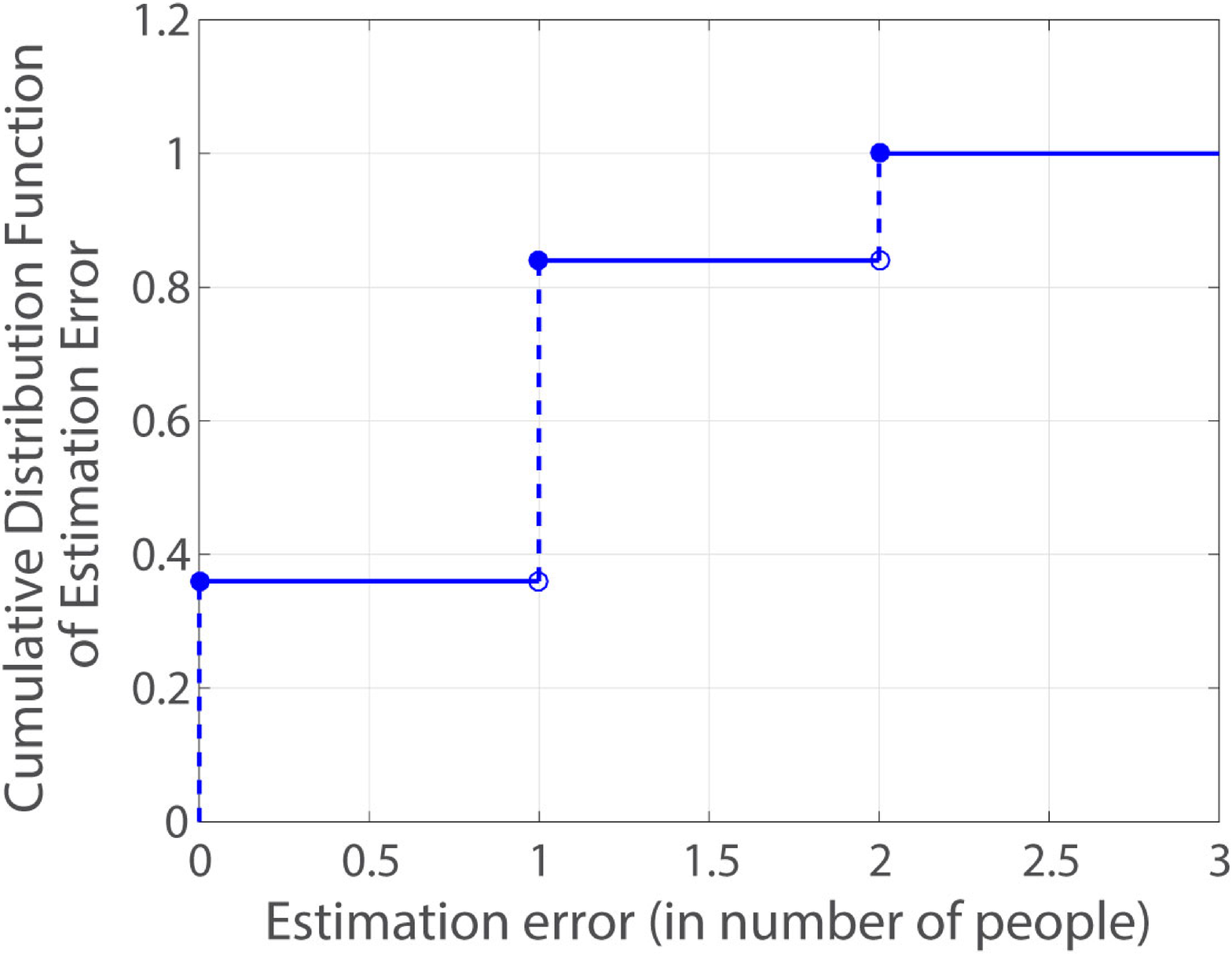}
	\vspace{-0.04in}
	\caption{The cumulative distribution function of counting estimation error based on $5$ sets of experiments in the classroom area of Fig. \ref{fig_classroom_scenario} on our campus (Area 1). In each set of experiment, we asked $1,\  3,\  5,\  7,\ $ and $9$ people to walk in the classroom. }
	\label{fig_cdf_plot}
	\vspace{-0.3in}
\end{figure}
 Based on several similar observations, we then contribute any dip in the RSSI signal level that is larger than a sufficiently-large threshold, $T_\textnormal{LOS}$, to people blocking the LOS link.\footnote{Note that we are only interested in detecting the time instants where any number of people block the LOS link, without the need to know the particular number of people that are along the LOS.} Furthermore, if $T_\textnormal{LOS}$ is chosen properly (not too large), then the chance of filtering a dip that was due to the LOS blockage becomes low. Thus, we utilize this approach in our experiments in order to identify the events of people crossing the LOS and hence the inter-event times.

In terms of the choice of the threshold, we choose the threshold $T_\textnormal{LOS}$ of $5$ dB in all our experiments, based on several observations similar to Fig. \ref{fig_9_ppl_LOS_MP_effects}. 
This means that any dip that is larger than $T_\textnormal{LOS}$ is labeled as a LOS blockage.
We note that, based on our observations, the choice of $T_\textnormal{LOS}$ is not strongly dependent on the area of interest, which allows us to set it without the need to make prior measurements in a specific area of interest. 
 We thus use the same value of $T_\textnormal{LOS}$ in all the five areas of interest considered in the next section. Furthermore, as we shall see in the sensitivity analysis of Section \ref{sec_sensitivity_analysis}, the threshold $T_\textnormal{LOS}$ is not sensitive to the specifics of the scenario such as the density of people and their walking speeds. Thus, $T_\textnormal{LOS}$ estimated with a specific number of people walking at a specific speed can be used to estimate a different number of people walking at other speeds, and more importantly in other areas. Finally, Section \ref{sec_sensitivity_analysis} explicitly shows that our experimental results are not that sensitive to the assumed $T_\textnormal{LOS}$ and moderate errors in estimating $T_\textnormal{LOS}$ are well tolerated. 
\subsection{Considering the Temporal Width of a Dip}
\vspace{-0.04in}
In practice, when a person crosses the LOS, the drop in the signal level is not an impulse drop. Rather, crossing the LOS link takes a finite amount of time, which means that each dip will have a small temporal duration.  Let $T_\textnormal{min}$ denote this time. Therefore, a person crossing the LOS link blocks the signal for a period of time $T_\textnormal{min}$. This then implies that we can not identify inter-event times that are less than $T_\textnormal{min}$.  In other words, any two events of crossing the LOS with an inter-event time smaller than $T_\textnormal{min}$ are not identifiable. Therefore, given that we can only identify inter-event times that are larger than $T_\textnormal{min}$ in practice, we modify our derived PMF of (\ref{eq_PMF_N_people}) to account for this. Then, $T_i,\  \forall\  i \in {1,2, \dots,n}$, is given as follows:
\begin{equation} \label{eq_modified_pmf}
\begin{split}
T_i|T_i \geq T_\textnormal{min} \sim   f_p^\textnormal{mod}(m;N) \triangleq \frac{f_p(m;N)}{{\displaystyle \sum_{r=T_{min}}^{\infty} f_p(r;N)}} 
\end{split}
\end{equation}
$f_p$ in (\ref{eq_loglikelihood}) is then replaced with $f_p^\textnormal{mod}$ to estimate the total number of people. 
\begin{figure*}[t]
	\begin{minipage}{0.75\textwidth}
		\includegraphics[width=1\linewidth]{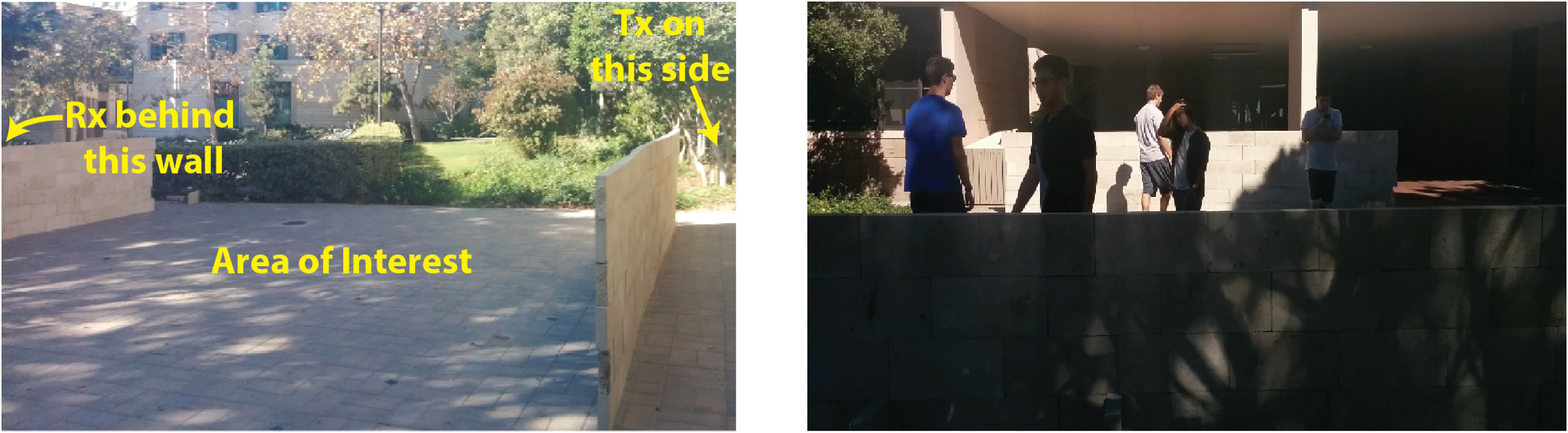}
		\vspace{-0.2in}
		\captionof{figure}{(left) The second area of interest between two concrete walls (Area 2), and (right) an example where people are walking in between these walls. The WiFi nodes are located outside of the area of interest, behind the walls, as indicated in the left figure\label{fig_hfh_wall}. Readers are referred to the color pdf for better visibility.}
		\vspace{-0.25in}
	\end{minipage}
	\hspace{0.03in}
	\begin{minipage}{0.2\textwidth}
		\centering
		\resizebox{0.8\textwidth}{!}{%
			\begin{tabular}{|C{1cm}|C{1.5cm}|}
				\hline
				Number of People   & Estimated Number of People \\ \hline
				1 & 3  \\ \hline
				3 & 5  \\ \hline
				5 & 6  \\ \hline
				7 & 6  \\ \hline
				9 & 7  \\ \hline
		\end{tabular}}
		\captionof{table}{A sample result for counting through walls based on our proposed approach, for the two-wall hallway scenario of Fig. \ref{fig_hfh_wall} on our campus (Area 2).}	\label{table_hfh}	
		\vspace{-0.22in}
	\end{minipage}
\end{figure*}
\begin{figure*}[t]
	\begin{minipage}{0.75\textwidth}
		\includegraphics[width=1\linewidth]{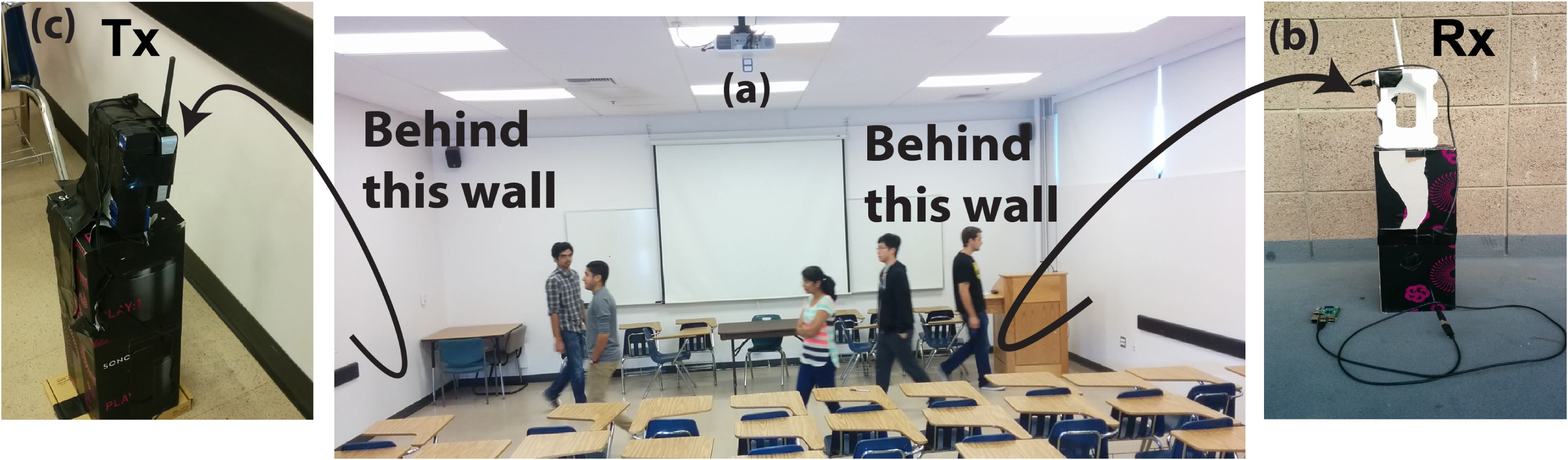}
		\vspace{-0.16in}
		\caption{(a) The third area of interest (Area 3), a closed classroom on our campus where people are walking. The room is enclosed by concrete walls on all four sides, (b) the Rx WiFi node located outside the classroom, behind one of the walls of the classroom as marked, and (c) the Tx WiFi node which is located outside of the classroom behind the wall that is indicated.} \label{fig_arts}
		\vspace{-0.28in}
	\end{minipage}
	\hspace{0.03in}
	\begin{minipage}{0.2\textwidth}
		\centering
		\resizebox{0.8\textwidth}{!}{%
			\begin{tabular}{|C{1cm}|C{1.5cm}|}
				\hline
				Number of People   & Estimated Number of People \\ \hline
				3 & 3  \\ \hline
				5 & 4  \\ \hline
				7 & 6  \\ \hline
				9 & 7  \\ \hline
		\end{tabular}}
		\captionof{table}{A sample result for counting through walls based on our proposed approach, for the classroom scenario of Fig. \ref{fig_arts} on our campus (Area 3).}	\label{table_arts}	
		\vspace{-0.35in}
	\end{minipage}
\end{figure*}
\begin{figure}[t]
	\centering
	\includegraphics[width=0.7\linewidth]{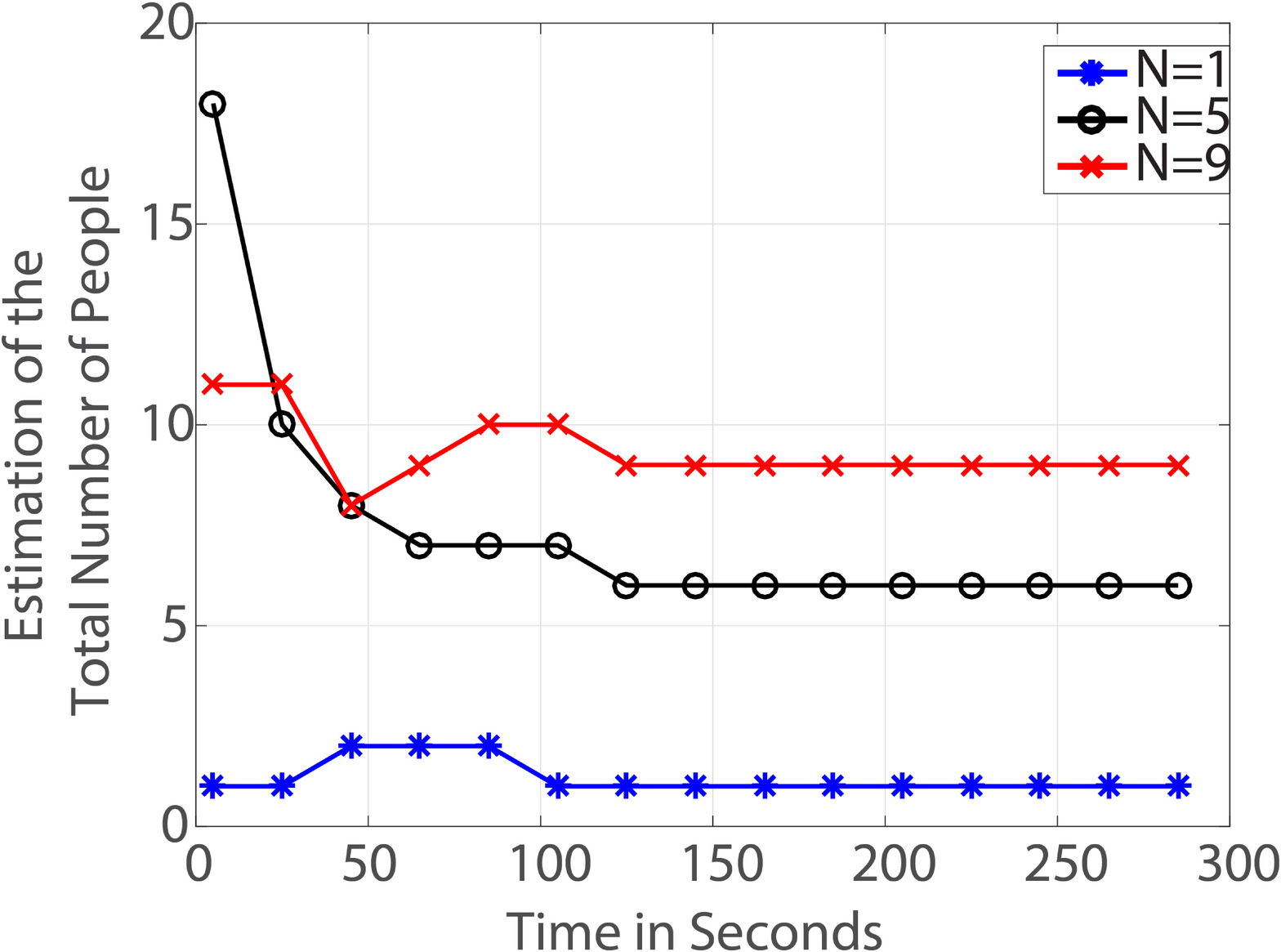}
	\vspace{-0.08in}
	\caption{Estimation of the total number of people as a function of time, for the classroom of Fig. \ref{fig_classroom_scenario} (Area 1) and for the three cases where $1$, $5$ and $9$ people are walking. It can be seen that the estimates converge to within one person of their final values within the first $100$ seconds.}
	\vspace{-0.3in}
	\label{fig_est_fun_time}
\end{figure}

The value of $T_\textnormal{min}$ depends on the speed of people. In this paper, we have assumed that people are walking casually. Based on simple experimental tests of one person crossing a link at a walking speed, we have chosen $T_\textnormal{min}= 1$ second in our results of the next section. We note that we do not need to measure this value in the particular experimental site of interest, as it is not that dependent on a particular site, but is rather more a function of the speed of people. Furthermore, as we shall see in section \ref{sec_sensitivity_analysis}, the experimental results are not that sensitive to the exact value of the assumed speed (and thus not that sensitive to the exact value of $T_\textnormal{min} $). 

The PMF of inter-event times in (\ref{eq_modified_pmf}) is a function of the CCDF of a single person inter-event times, $F_c(z)$, as shown in (\ref{eq_PMF_N_people}). In this paper, we obtain $F_c(z)$ using simulations. More specifically, we simulate motion of $1$ person using the motion model of Section \ref{sec_motion_model}. We then identify the times at which the person crosses the LOS link and extract the inter-event times. $F_c(z)$ is then obtained using these simulated inter-event times for a single person. We note that such a simulation is low in computation time (e.g., $1$ s), since it involves only one person.
\vspace{-0.2in}
\subsection{Experimental Results and Discussion}\label{sec_results_discussion}
To validate the proposed framework of Section \ref{sec_estimation_method}, we ran several experiments using the aforementioned experimental setup. We next present the results.

Fig. \ref{fig_classroom_scenario} \hl{shows the first experimental area (Area 1), which is a closed classroom on our campus, bounded by concrete walls on all four sides.} We asked people to walk inside the room while the WiFi nodes are located outside of the room, as shown in Fig. \ref{fig_classroom_scenario}. The walls are made of concrete bricks which are highly attenuating. The thickness of each wall is $20\ $cm based on our assessment. The dimensions of inside of the room, where people are walking, are $L=6.3 \ $m and $B=7.8\ $m, with the Tx and Rx positioned at $\frac{B}{2}$ (See Fig. \ref{fig_workspace}).

We have conducted several experiments in Area 1 when $1,3,5,7$, and $9$ people \hl{walked} inside the room. In each experiment, the measurements are collected for $300$ seconds at $20$ samples/sec. People are assumed to have a casual walking speed, which we take it to be $1\  $m/s in our theoretical modeling.\footnote{Note that we do not ask people to walk with a specific speed or in a specific pattern during the experiments. Instead, we simply ask them to walk casually in the area of interest.} Table \ref{table_classroom} shows sample results for the estimation of the number of people. It can be seen that our approach can estimate the total number of people walking inside the classroom with a high accuracy, by making WiFi measurements from outside, behind the classroom walls.  To further validate our framework statistically, we have run a series of experiments on different times/days to collect statistics of the estimation error. More specifically, we have run experiments on 5 different occasions in the classroom area of Fig. \ref{fig_classroom_scenario} (Area 1). In each run, $1,3,5,7$, and $9$ people are asked to walk in the classroom. Fig. \ref{fig_cdf_plot}  shows the Cumulative Distribution Function (CDF) of the estimation error based on these repeated measurements. It can be seen from the CDF plot that the estimation error is $1$ person or less $81$\% of the time and $2$ people or less $100$\% of the time, confirming a good statistical performance.

\begin{figure*}
	\begin{minipage}[t]{0.75\textwidth}
		\includegraphics[width=0.92\linewidth]{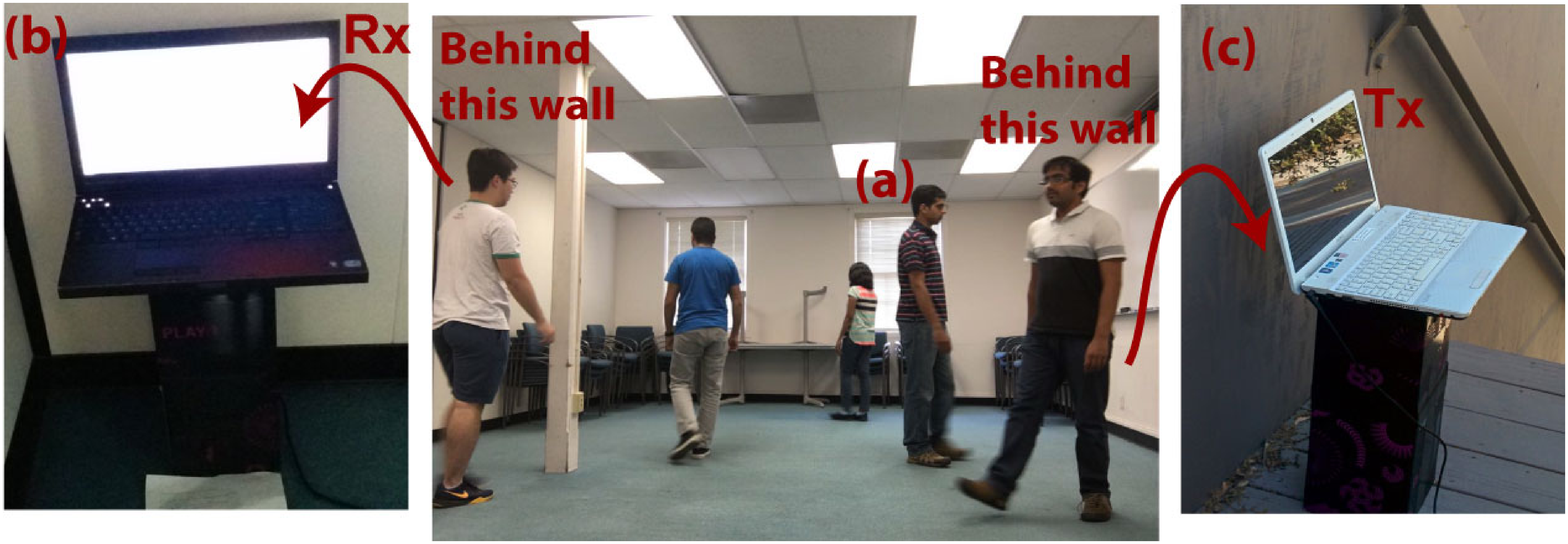}
		\vspace{-0.05in}
		\caption{(a) The fourth area of interest (Area 4), a closed conference room on our campus where people are walking. The room is enclosed by wooden walls on all four sides, (b) the Rx WiFi node located outside the room, behind one of the walls of the room as marked, and (c) the Tx WiFi node which is located outside of the room behind the wooden wall that is indicated. 
		}\label{fig_wooden_trailer}
	\end{minipage}
	\hspace{0.03in}
	\begin{minipage}{0.2\linewidth}
		\vspace{-0.9in}
		\resizebox{0.85\textwidth}{!}{%
		\begin{tabular}{|C{1cm}|C{1.7cm}|}
			\hline
			Number of People   & Estimated Number of People \\ \hline
			1 & 2  \\ \hline
			2 & 2  \\ \hline
			3 & 5  \\ \hline
			4 & 4 \\ \hline
			5 & 6  \\ \hline
			6 & 6  \\ \hline
			7 & 8  \\ \hline
			8 & 8  \\ \hline
			9 & 11  \\ \hline
		\end{tabular}}
		\captionof{table}{A sample result for counting through walls based on our proposed approach, for the classroom scenario of Fig. \ref{fig_wooden_trailer} on our campus (Area 4).}\label{table_woodenroom}	
	\end{minipage}
\vspace{-0.1in}
\end{figure*}
\begin{figure*}
	\begin{minipage}{0.7\textwidth}
		\centering
		\vspace{-0.35in}
		\includegraphics[width=1\linewidth]{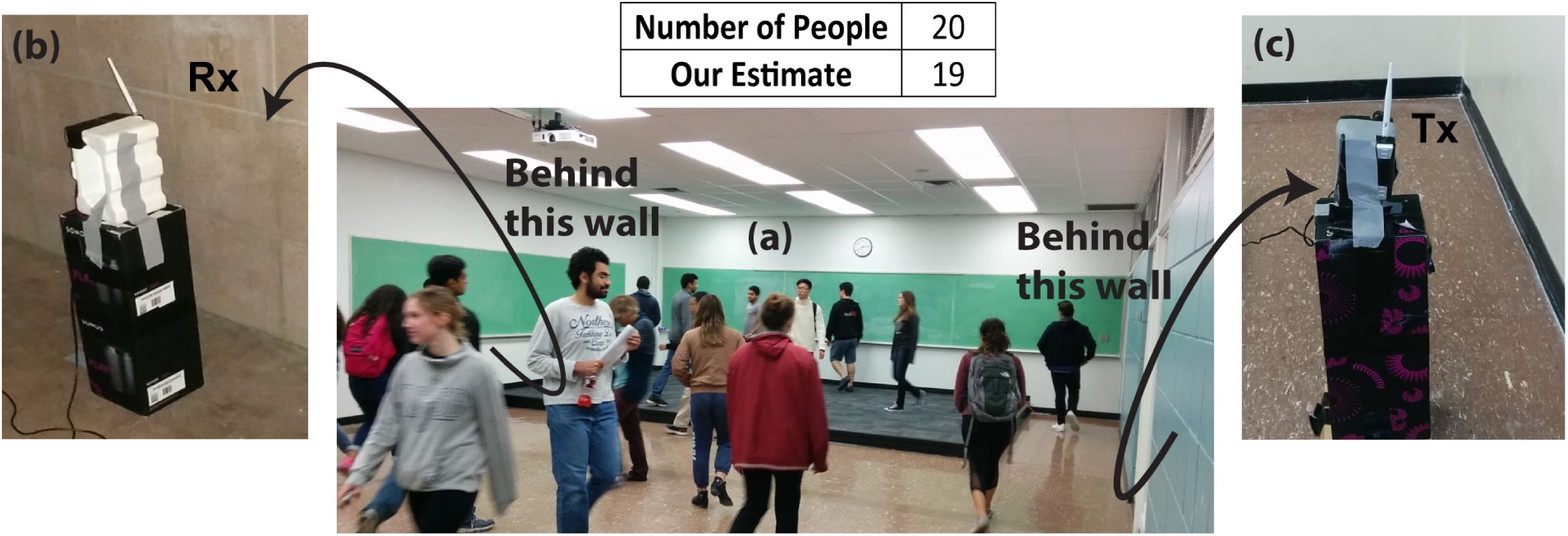}
		\vspace{-0.2in}
		\caption{(a) The fifth area of interest (Area 5), a closed classroom on our campus where people are walking. The room is enclosed on all four sides by walls that are made of a mixture of concrete and plaster, (b) the Rx WiFi node located outside the classroom, behind one of the walls of the classroom as marked, and (c) the Tx WiFi node which is located outside of the classroom behind the wall that is indicated. The performance of our framework with $20$ people walking in this area is also shown. It can be seen that our framework accurately estimates the number of people.}\label{fig_classroom_phelps_20ppl}
		\vspace{-0.1in}
	\end{minipage}
	\hspace{0.12in}
	\begin{minipage}{0.25\textwidth}
		\centering
		\resizebox{0.7\textwidth}{!}{%
		\begin{tabular}{|C{1cm}|C{1.5cm}|}
			\hline
			Number of People   & Estimated Threshold (dB) \\ \hline
			1 & 4  \\ \hline
			3 & 4  \\ \hline
			5 & 4  \\ \hline
			9 & 5  \\ \hline
		\end{tabular}}
		\captionof{table}{Sensitivity of the estimated threshold $T_\textnormal{LOS}$ to the number of people walking in the area. It can be seen that the optimum threshold is not that sensitive to the number of people in the area.}	\label{table_sensitivity_thr_num_ppl}
	\end{minipage}
\vspace{-0.2in}
\end{figure*}

To further validate our approach, we next run experiments in an outdoor area  occluded by walls. Fig. \ref{fig_hfh_wall} shows the outdoor area of interest (Area 2). As can be seen, two parallel walls are constructed with concrete bricks. The thickness of each wall is $5\  $cm in this case. The dimensions of the area of interest are $ L=10\  $m and $B=7\ $m. People are then asked to walk in the hallway created in between the walls, while a Tx and a Rx node are mounted outside of each wall.  Table \ref{table_hfh} shows a sample result obtained in Area 2. As can be seen, the number of people is estimated with a good accuracy. Fig. \ref{fig_arts} shows a third area of interest (Area 3), which is another classroom on our campus. The area is bounded by concrete walls on all four sides. People walk in part of this room with the dimensions of $ L=7.8\  $m and $B=3.96\ $m as shown in Fig. \ref{fig_arts}. Note that Area 3 has rich multipath due to the furniture in the room. Table \ref{table_arts} shows a sample result obtained in Area 3. It can be seen that the number of people are estimated accurately.

To further validate our framework with walls made of different material than concrete, we ran experiments in a room enclosed by wooden walls. Fig. \ref{fig_wooden_trailer} shows the fourth area of interest (Area 4), which is a conference room on our campus. The dimensions of the area of interest are $L=4.1\  $m and $B=7.5\ $m. We then run experiments with up to and including $9$ people in this area. Table \ref{table_woodenroom} shows the performance of our framework in this case. It can be seen that our framework can estimate the number of people with a high accuracy, which shows the robustness of our approach to the wall material.

So far, we demonstrated experimental results with up to and including $9$ people in $4$ different areas on our campus. To test the scalability of our approach, we further run experiments with $20$ people walking inside a classroom. Fig. \ref{fig_classroom_phelps_20ppl} shows the fifth area of interest which is a classroom on our campus enclosed on all four sides by walls that are made of a mixture of concrete and plaster (Area 5). The dimensions of this area are $ L=7.9\  $m and $B=12.6\ $m. We then run experiments with $20$ people walking inside this classroom as shown in Fig. \ref{fig_classroom_phelps_20ppl}. Our framework estimates the number of people inside as $19$ in this case, which shows the scalability of our framework to the higher number of people with only one WiFi link.\footnote{We note that as the size of the area and the number of people increases, at some point we inevitably have to use more links. However, the fact that $20$ people can be counted through walls with only one WiFi link in an area of the size $100\ \textnormal{m}^2$ is promising for how this approach will scale to bigger areas and more people.}
This experiment further tests the proposed approach with a third kind of wall material, a mixture of plaster and concrete, and confirms its robustness. Overall, considering all five areas, our framework can estimate up to and including $20$ people with an error of $2$ people or less $100$\% of the time and with an error of $1$ person or less $75$\% of the time. 

So far, we have demonstrated that the proposed framework can estimate the total number of people walking inside an occluded area of interest with a high accuracy. In all the experimental results so far, we have used data collected for $300$ seconds. Next, we show the time we need to wait before the estimates converge to their final values. More specifically, Fig. \ref{fig_est_fun_time} shows the estimates as a function of time for an experiment with $1$, $5$, and $9$ people for the classroom scenario of Fig. \ref{fig_classroom_scenario} (Area 1). It can be seen that the estimates converge to within $1$ person of their final values within the first $100$ seconds.
\begin{figure*}[t]
	\begin{minipage}{0.4\textwidth}
		\centering
		\includegraphics[width=0.73\linewidth]{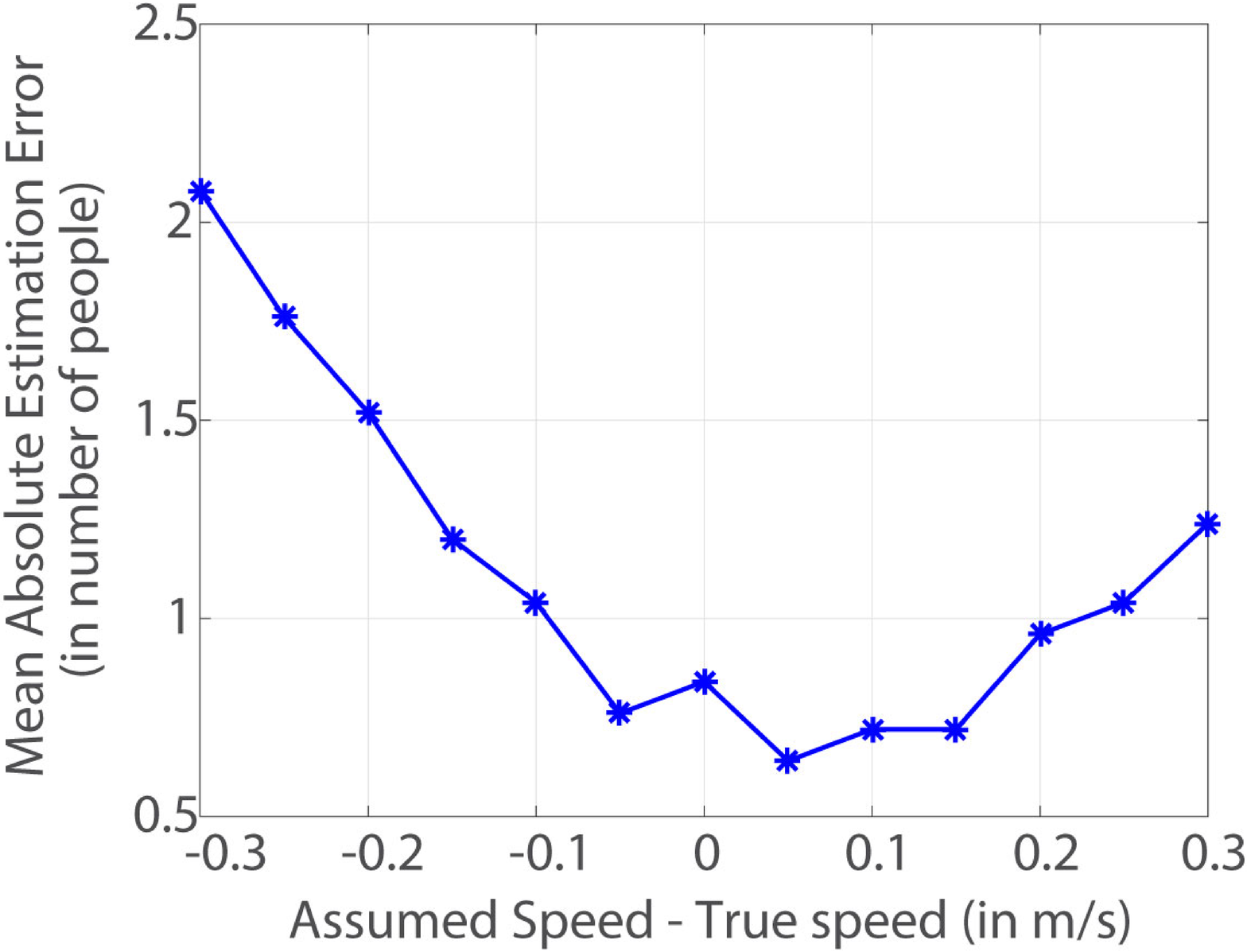}
		\vspace{-0.15in}
		\caption{Sensitivity of our crowd counting to the errors in the assumed walking speed. The casual speed of people is assumed to be 1 m/s.  Other speeds were then assumed in our derivations when estimating the number of people.  It can be seen that our framework is robust to moderate errors in the assumed speed.}
		\label{fig_sensitivity_vel}
	\end{minipage}
	\hspace{0.15in}
	\begin{minipage}{0.3\textwidth}
		\vspace{-0.38in}
		\includegraphics[width=1.08\linewidth]{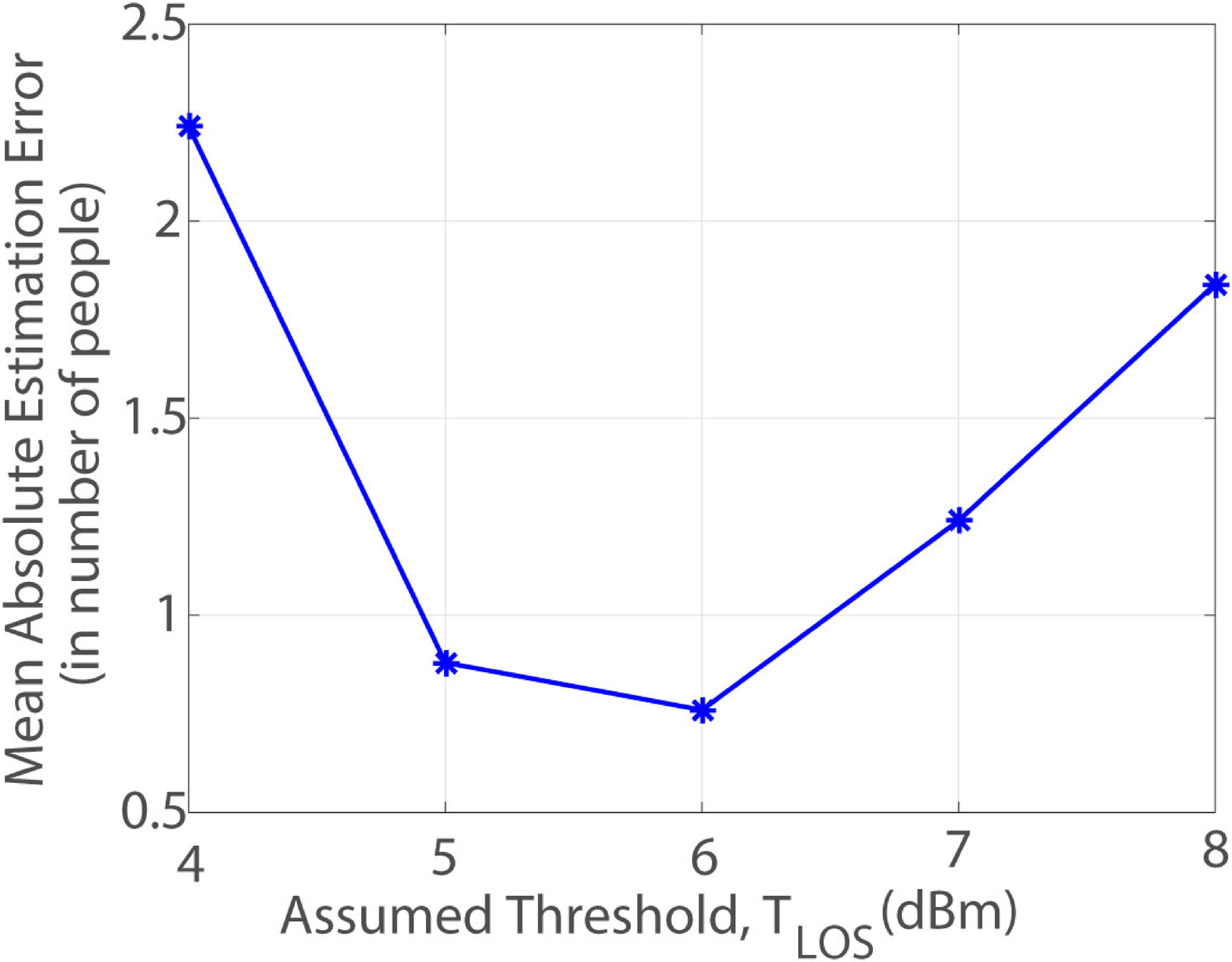}
		\vspace{-0.2in}
		\caption{Sensitivity of our crowd counting to the errors in the assumed threshold $T_\textnormal{LOS}$. It can be seen that our framework is robust to moderate errors in the assumed value of the threshold. }
		\vspace{-0.3in}
		\label{fig_sensitivity_threshold}
	\end{minipage}
	\hspace{0.15in}
	\begin{minipage}{0.2\textwidth}
		\resizebox{0.9\textwidth}{!}{%
			\begin{tabular}{|C{1.5cm}|C{1.5cm}|}
				\hline
				Speed of People   & Estimated Threshold (dB) \\ \hline
				Standing Still & 4  \\ \hline
				Normal Walking & 5  \\ \hline
				Running & 4  \\ \hline
		\end{tabular}}
		\captionof{table}{Sensitivity of the optimum threshold $T_\textnormal{LOS}$ to the speed of people walking in the area. It can be seen that the estimated optimum threshold is not that sensitive to the speed of people in the area.}	\label{table_sensitivity_thr_speed_ppl}	
	\end{minipage}
	\vspace{-0.28in}
\end{figure*}

Overall, our experimental results confirm that the proposed \hl{framework can} estimate the number of people inside a room or a building, or in general behind walls, solely from WiFi RSSI measurements acquired from outside, with a good accuracy.
\subsection{Sensitivity Analysis}\label{sec_sensitivity_analysis}
In the experimental results of this section, we took $T_\textnormal{LOS}$ as $5$ dB and assume a walking speed of $1$ m/s. We next show that the our framework is not sensitive to the exact value of $T_\textnormal{LOS}$ and the speed of people and that moderate errors in both can be well tolerated. Furthermore, we show that the estimation of $T_\textnormal{LOS}$ is not that sensitive to the specifics of the scenarios such as the density of people and their walking speeds. This then greatly reduces the calibration demand of our approach as $T_\textnormal{LOS}$ estimated with a specific number of people walking at a specific speed in the calibration phase can be used to estimate a different number of people walking at a different speed in the estimation phase. We furthermore have only calibrated $T_\textnormal{LOS}$ in one area and have used it in the other $4$ areas in all our experiments, which indicates the generalizability of it across different areas, further reducing the calibration burden, which is important for behind-wall scenarios.
\subsubsection{Sensitivity to the Assumed Walking Speed}\label{sec_sens_walk_speed}
The results of Section \ref{sec_results_discussion} assumed that the people in the area of interest are walking at an average speed of \hl{$1$ m/s, based on the typical} walking speed of humans. However, the average walking speed could vary slightly from this value depending on the person or the environment, for instance due to the density of people in the region. In this section, we consider the effect of errors in the assumed walking speed (as compared to the true speed of people) on the estimation of the number of people.

In order to analyze the effect of the assumed walking speed and its deviation from the true speed of people during the experiment, consider an experiment where people are told to walk casually, which amounts to a speed of around 1 m/s. We then assume that people are walking at a speed of $v$ m/s in our derivations and estimate the number of people based on our framework. Fig. \ref{fig_sensitivity_vel} shows the mean absolute estimation error in the number of people as a function of the error in the assumed walking speed. At each assumed speed, the estimates with different number of people ($N$=1, 3, 5, 7, and 9) walking in the classroom area of Fig. \ref{fig_classroom_scenario} are obtained over 5 repeated sets of experiments. The mean of the absolute error is then shown in Fig. \ref{fig_sensitivity_vel}. It can be seen that the estimation error is less than 2 people in most of the assumed speed range, showing the robust nature of our framework to small errors in the assumed walking speed of the people as compared to the true speed.
\subsubsection{Sensitivity to the Assumed Threshold}
As explained in Section \ref{sec_data_processing}, a threshold $T_\textnormal{LOS}$ is used to separate the dips of the wireless measurements that are due to people blocking the LOS path from the dips due to multipath. The time instants at which these dips occur are then used to estimate the number of people in the area as explained in Section \ref{sec_estimation_method}. As discussed in Section \ref{sec_data_processing}, we have used $T_\textnormal{LOS}=5$ dB in all our results. However, the true optimal value of $T_\textnormal{LOS}$ is hard to quantify. In this section, we consider the impact of the choice of $T_\textnormal{LOS}$ on the estimates of the number of people. More specifically, we consider a range of values for $T_\textnormal{LOS}$ and estimate the number of people. At each $T_\textnormal{LOS} $, the estimates with different number of people ($N$=1, 3, 5, 7, and 9) walking in the classroom area of Fig. \ref{fig_classroom_scenario} are obtained over 5 repeated sets of experiments. The mean of the absolute error is then shown in Fig. \ref{fig_sensitivity_threshold}. As can be seen, the mean error is less than $2$ people for a wide range of $T_\textnormal{LOS}$, which shows the robust nature of our framework to moderate errors in the estimated threshold $T_\textnormal{LOS}$. 
\subsubsection{Sensitivity of the Threshold to the Density and Speed of People}
As explained in Section \ref{sec_data_processing}, the threshold $T_\textnormal{LOS}$ is estimated by collecting wireless measurements when people are walking without blocking the LOS link. This threshold is then used to separate the LOS blockage from the multipath. For instance, the estimate of $T_\textnormal{LOS} = 5$ dB used in all our experiments is obtained in the calibration phase when $9$ people are walking on either side of the LOS link in one area. In this section, we consider the effect of the number of people walking in the area and their walking speed in estimating $T_\textnormal{LOS}$. More specifically, we let different number of people ($N$=1, 3, 5, and 9) walk on either side of the LOS link without blocking the LOS link as explained in Section \ref{sec_data_processing}. Furthermore, we let $9$ people walk at three different speeds of standing still, normal walking, and running. Table \ref{table_sensitivity_thr_num_ppl} and \ref{table_sensitivity_thr_speed_ppl}  show the estimated threshold as a function of the number of people in the area and their walking speeds, respectively. It can be seen that the estimated threshold is not that sensitive to the number of people walking in the area or to their speeds, which explains the good accuracy of our results with different number of people and with an assumed speed of $1$ m/s.
\vspace{-0.1in}

\end{section}
\vspace{-0.05in}
\begin{section}{Conclusions}{\label{sec_conclusions}}
\vspace{-0.1in}
In this paper, we proposed a framework to count the total number of people walking in an area that is occluded by walls, using only the RSSI of WiFi transceivers that are installed outside of the area. We proposed to use the inter-event times corresponding to the signal dips for crowd counting through walls as it is more robust to the attenuation of the walls. More specifically, we showed how to model the impact of people on the received power measurements using superposition of Renewal-type processes. We then mathematically characterized the statistics of the inter-event times of the resulting process and showed how it contains vital information on the total number of people, which then became the base for our ML estimation of the total number of people. To validate our proposed framework, we ran extensive experiments in five different areas on our campus, three classrooms, a conference room, and a hallway, with walls made of different material such as concrete, plaster, and wood, and with up to and including $20$ people, and showed that our approach can estimate the total number of people through walls with a high accuracy.
\end{section}
\bibliographystyle{IEEEtran}
\bibliography{ref_cctw}

\end{document}